\newcommand{\union}{\cup}
\newcommand{\Union}{\bigcup}
\newcommand{\sop}{[}
\newcommand{\scl}{]}
\newcommand{\sel}[2]{#1 \backslash #2}
\newcommand{\unsubst}[2]{\sop \sel{#1}{#2} \scl}
\newcommand{\impl}{\supset}
\newcommand{\Land}{\bigwedge}
\newcommand{\Lor}{\bigvee}
\newcommand{\LK}{\ensuremath{\mathbf{LK}}}	
\newcommand{\LKeq}{\ensuremath{\LK_=}}  
\newcommand{\seq}{\rightarrow}	
\newcommand{\com}[1]{|{#1}|}	
\newcommand{\comq}[1]{|{#1}|_q} 
\newcommand{\contrstar}{\mathrm{c}^*}
\newcommand{\cut}{\mathrm{cut}}
\newcommand{\foralll}{\forall_\mathrm{l}}
\newcommand{\forallr}{\forall_\mathrm{r}}
\newcommand{\ElI}{{\rm El1}}
\newcommand{\ElII}{{\rm El2}}
\newcommand{\ErI}{{\rm Er1}}
\newcommand{\ErII}{{\rm Er2}}
\newcommand{\Fcal}{{\cal F}}
\newcommand{\IN}{\subseteq}
\newcommand{\res}{\ensuremath{\mathrm{res}}}
\newcommand{\CI}{{\rm CI}}
\newcommand{\SF}[1]{\ensuremath{\mathrm{SF}_{#1}}}    
\newcommand{\formulas}{\mathfrak{F}}
\newcommand{\solutions}{\mathfrak{S}}
\newcommand{\forgetOp}{\ensuremath{\mathcal{F}}} 
\newcommand{\para}{\ensuremath{\mathrm{para}}} 
\newcommand{\extractTS}{\ensuremath{\mathrm{extractTermSet}}} 
\newcommand{\getMinimalDecomposition}{\ensuremath{\mathrm{getMinimalDecomposition}}} 
\newcommand{\getCanonicalSolution}{\ensuremath{\mathrm{getCanonicalSolution}}} 
\newcommand{\improveSolution}{\ensuremath{\mathrm{improveSolution}}} 
\newcommand{\constructProof}{\ensuremath{\mathrm{constructProof}}}
\newcommand{\DeltaG}{$\Delta_G$}
\newcommand{\mt}[1]{\textnormal{#1}}
\newcommand{\colvec}[1]{\left(\begin{array}{c}#1\end{array}\right)}
\title{Introducing Quantified Cuts\\in Logic with Equality
}
\author{
  Stefan Hetzl\inst{1} \and 
  Alexander Leitsch\inst{2} \and 
  Giselle Reis\inst{2} \and\\
  Janos Tapolczai\inst{1} \and
  Daniel Weller\inst{1}
}
\institute{
  Institut f\"ur Diskrete Mathematik und Geometrie, Technische Universit\"at Wien
  \and
  Institut f\"ur Computersprachen, Technische Universit\"at Wien
}
\begin{document}

\maketitle

\begin{abstract}
Cut-introduction is a technique for structuring and compressing formal proofs.
In this paper we generalize our cut-introduction method for the introduction of
quantified lemmas of the form $\forall x.A$ (for quantifier-free $A$) to a
method generating lemmas of the form $\forall x_1 \ldots \forall x_n.A$.
Moreover, we extend the original method to predicate logic with equality. The
new method was implemented and applied to the TSTP proof database. It is shown
that the extension of the method to handle equality and quantifier-blocks leads
to a substantial improvement of the old algorithm.
\end{abstract}

%
%
%

\section{Introduction}
Computer-generated proofs are typically analytic, i.e., they only contain logical material that also appears in the statement of the theorem. This is due to the fact
that analytic proof systems have a considerably smaller search space which makes proof-search practically feasible. In the case of sequent calculus, proof-search
procedures typically work on the cut-free fragment. But also resolution is essentially analytic as resolution proofs satisfy the subformula property of first-order
logic. One interesting property of non-analytic proofs is their considerably smaller length. The exact difference depends on the logic (or theory) under
consideration, but it is typically enormous. In (classical and intuitionistic) first-order logic there are proofs with cut of length $n$ whose theorems have only
cut-free proofs of length $2_n$ (where $2_0 = 1$ and $2_{n+1}=2^{2_n}$) (see~\cite{Statman79Lower} and~\cite{Orevkov79Lower}). The length of a proof plays an
important role in many situations such as human readability, space requirements and time requirements for proof checking. For most of these situations general-purpose
data compression methods cannot be used as the compressed representation is not a proof anymore. It is therefore of high practical interest to develop methods of
proof transformation which produce non-analytic and hence potentially much shorter proofs.

Work on cut-introduction can be found at a number of different places in the literature.
Closest to our work are other approaches which aim to abbreviate or structure a
given input proof. In~\cite{WoltzenlogelPaleo10Atomic} an
algorithm for the introduction of atomic cuts that is capable of exponential
proof compression is presented.
The method~\cite{Finger07Equal} for propositional logic is
shown to never increase the size of proofs more than polynomially.
Another approach to the compression of first-order proofs by
introduction of definitions for abbreviating terms is~\cite{Vyskocil10Automated}.
There is a large body of work on the generation of non-analytic formulas carried out
by numerous researchers in various communities. Methods for lemma generation are of crucial importance
in inductive theorem proving which frequently requires generalization~\cite{Bundy01Automation},
see e.g.~\cite{Ireland96Productive} for a method in the context of rippling~\cite{Bundy05Rippling}
which is based on failed proof attempts.
In automated theory formation~\cite{Colton01Automated,Colton02Automated}, an eager
approach to lemma generation is adopted. This work has, for example, led to
automated classification results 
of isomorphism classes~\cite{Sorge08Classification} and isotopy classes~\cite{Sorge08Automatic}
in finite algebra. See also~\cite{Johansson11Conjecture} for an approach to inductive theory
formation.

Methods of {\em algorithmic cut-introduction}, based on the inversion of
Gentzen's cut-elimination method, have been defined in~\cite{Hetzl12Towards}
and~\cite{HetzlXXAlgorithmic}. The method  in~\cite{Hetzl12Towards} works on a
cut-free {\LK}-proof $\varphi$ of a prenex skolemized end-sequent $S$ and
consists of the following steps: (1) extraction of a set of terms $T$ from
$\varphi$, (2) computation of a compressed representation of $T$,
(3) construction of the cut formula, (4) improvement of the solution by
computation smaller cut-formulas, and (5) construction of an {\LK}-proof with the
universal cut formula obtained in (4) and instantiation of the quantifiers with
the terms obtained in (2). It has been shown in~\cite{Hetzl12Towards}
that the method is capable of compressing cut-free proofs quadratically. The
paper~\cite{HetzlXXAlgorithmic} generalized the method to the introduction of
arbitrarily many universal cut formulas, where the steps defined above are
roughly the same, though the improvement of the solution (step 4) and the final
construction of the proof with cuts (step 5) are much more difficult. The method
of introducing arbitrarily many universal cuts in~\cite{HetzlXXAlgorithmic} leads
even to an exponential compression of proof length. Still the methods described
above were mainly designed for a theoretical analysis of the cut-introduction
problem rather than for practical applications. In particular, they lacked
efficient handling of equality (as they were defined for predicate logic without
equality) and the introduction of several universal quantifiers in cut formulas
(all cut formulas constructed in~\cite{HetzlXXAlgorithmic} are of the form
$\forall x.A$ for a single variable $x$ and a quantifier-free formula $A$).

In this paper we generalize our cut-introduction method to predicate logic with
equality and to the construction of a (single) quantified cut containing blocks
of universal quantifiers. The efficient compression of the terms (step 2) and
the improvement of the solution (step 4) require new and non-trivial techniques.
Moreover, we applied the new method in large-scale experiments to proofs
generated by prover9 on the TPTP library. This empirical evaluation demonstrates
the feasibility of our method on realistic examples.

\section{Proofs and Herbrand Sequents}

Throughout this paper we consider predicate logic with equality. For practical
reasons equality will not be axiomatized but handled via substitution rules.
We extend the sequent calculus {\LK} to the calculus {\LKeq} by allowing 
sequents of the form $\seq t = t$ as initial sequents and adding the following rules:

{\small
\[
\infer[\ElI]{A[t], \Gamma,\Pi \seq \Delta,\Lambda}{
  \Gamma \seq \Delta, s=t
  &
  A[s], \Pi \seq \Lambda
}
\qquad
\infer[\ElII]{A[t], \Gamma,\Pi \seq \Delta,\Lambda}{
  \Gamma \seq \Delta, t=s
  &
  A[s], \Pi \seq \Lambda
}
\]
\[
\infer[\ErI]{\Gamma,\Pi \seq A[t], \Delta,\Lambda}{
  \Gamma \seq \Delta, s=t
  &
  \Pi \seq A[s], \Lambda
}
\qquad
\infer[\ErII]{\Gamma,\Pi \seq A[t], \Delta,\Lambda}{
  \Gamma \seq \Delta, t=s
  &
  \Pi \seq A[s], \Lambda
}
\]
}
{\LKeq} is sound and complete for predicate logic with equality.

For convenience we write a substitution $[\sel{x_1}{t_1}, \ldots,
\sel{x_n}{t_n}]$ in the form $\unsubst{\bar{x}}{\bar{t}}$ for $\bar{x} =
(x_1,\ldots,x_n)$ and $\bar{t} = (t_1,\ldots,t_n)$.
A {\em strong quantifier} is a $\forall$ ($\exists$) quantifier with positive
(negative) polarity. We restrict our investigations to end-sequents in prenex
form without strong quantifiers.

\begin{definition}\label{def.sigmaIseq}
A {\em $\Sigma_1$-sequent} is a sequent of the form
$$\forall x_1\cdots \forall x_{k_1}F_1, \ldots, \forall x_1\cdots \forall x_{k_p}F_p
\seq
\exists x_1\cdots \exists x_{k_{p+1}}F_{p+1}, \ldots, \exists x_1\cdots \exists x_{k_q}F_q.$$
for quantifier free $F_i$.
\end{definition}

Note that the restriction to $\Sigma_1$-sequents does not constitute a
substantial restriction as one can transform every sequent into a
validity-equivalent $\Sigma_1$-sequent by skolemisation and prenexing.

\begin{definition}
A sequent $S$ is called {\em E-valid} if it is valid in predicate logic with
equality; $S$ is called a {\em quasi-tautology} if $S$ is quantifier-free and
E-valid.
\end{definition}

\begin{definition}
The length of a proof $\varphi$, denoted by $\com{\varphi}$, is defined as the
number of inferences in $\varphi$. The quantifier-complexity of $\varphi$,
written as $\comq{\varphi}$, is the number of weak quantifier-block
introductions in $\varphi$.
\end{definition}

\subsection{Extraction of terms}

Herbrand sequents of a sequent $S$ are sequents consisting of instantiations of
$S$ which are quasi-tautologies. The formal definition is:
\begin{definition}\label{def.Hseq}
Let $S$ be a $\Sigma_1$-sequent as in Definition~\ref{def.sigmaIseq} and let
$H_i$ be a finite set of $k_i$-vectors of terms
for every $i \in \{1,\ldots,q\}$. We define 
$\Fcal_i = \{F_i\unsubst{\bar{x_i}}{\bar{t}} \mid \bar{t}\in  H_i\}$ if $k_i > 0$
and $\Fcal_i = \{ F_i \}$ if $k_i = 0$. Let
$$S^*:\;\;\Fcal_1 \union \cdots  \union \Fcal_p  \seq \Fcal_{p+1} \union \ldots \union \Fcal_q.$$
If $S^*$ is a quasi-tautology then it is called a {\em Herbrand sequent} of $S$
and $H\colon (H_1,\ldots,H_q)$ is called a {\em Herbrand structure} of $S$. We
define the {\em size} of $S^*$ as $\sum^q_{i=1}|H_i|$.
\end{definition}
Note that, in the size of a Herbrand sequent, we only count the formulas
obtained by instantiation.

\begin{example}\label{ex.Hseq}
Consider the language containing a constant symbol $a$, unary function symbols
$f,s$,  a binary predicate symbol $P$, and the sequent $S$ defined below. We
write $f^n,s^n$ for $n$-fold iterations of $f$ and $s$ and omit parentheses
around the argument of a unary symbol when convenient. Let
\[S: P(f^4a,a), \forall x.fx = s^2x, \forall xy (P(sx,y) \impl P(x,sy)) \seq P(a,f^4a)\]
and $H = (H_1,H_2,H_3,H_4)$ for
{\small
\[
\begin{array}{l}
H_1 = \emptyset,\ H_4 = \emptyset,\ H_2 = \{a,fa,f^2a,f^3a\},\\
H_3 = \{(s^3f^2a,a), (s^2f^2a,sa),  (sf^2a,s^2a), (f^2a,s^3a), (s^3a,f^2a), (s^2a,sf^2a), (sa,s^2f^2a), (a,s^3f^2a)\}.
\end{array}
\]
}
Then 
{\small
\begin{eqnarray*}
\Fcal_1 &=& \{P(f^4a,a)\},\ \Fcal_4 = \{P(a,f^4a)\},\ \Fcal_2 = \{fa = s^2a,\ f^2a=s^2fa,\ f^3a = s^2f^2a,\ f^4a = s^2f^3a\}\\
\Fcal_3 &=& \{P(s^4f^2a,a) \impl P(s^3f^2a,sa), P(s^3f^2a,sa) \impl P(s^2f^2a,s^2a), P(s^2f^2a,s^2a) \impl P(sf^2a,s^3a),\\
        & & P(sf^2a,s^3a) \impl P(f^2a,s^4a), P(s^4a,f^2a) \impl P(s^3a,sf^2a), P(s^3a,sf^2a) \impl P(s^2a,s^2f^2a),\\
        & & P(s^2a,s^2f^2a) \impl P(sa,s^3f^2a), P(sa,s^3f^2a) \impl P(a,s^4f^2a)\}.
\end{eqnarray*}
}
A Herbrand-sequent $S^*$ corresponding to $H$ is then $\Fcal_1 \union \Fcal_2 \union \Fcal_3 \seq \Fcal_4$. Note that $ fa = s^2a,\ f^2a=s^2fa,\ f^3a = s^2f^2a,\ f^4a
= s^2f^3a \models f^4a = s^8a$.\\
The size of $S^*$ is 12. $S^*$ is a quasi-tautology but not a tautology.
\end{example}

%
\begin{theorem}[mid-sequent theorem]\label{thm.midseqthm}
Let $S$ be a $\Sigma_1$-sequent and $\pi$ a cut-free proof of $S$.
Then there is a Herbrand-sequent $S^*$ of $S$
s.t.\ $\com{S^*}\leq \comq{\pi}$.
\end{theorem}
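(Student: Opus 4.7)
The plan is to adapt Gentzen's classical mid-sequent theorem to $\LKeq$ with blocks of weak quantifier inferences. Since $\pi$ is cut-free and $S$ is a $\Sigma_1$-sequent (in particular without strong quantifiers), the subformula property ensures that every quantifier inference in $\pi$ is either a $\foralll$-block acting on some $\forall \bar{x}_i F_i$ or an $\existsr$-block acting on some $\exists \bar{x}_i F_i$. No $\forallr$ or $\existsl$ occurs, so no eigenvariable condition constrains the order in which the remaining rules may be permuted.

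The first step is to transform $\pi$ into a proof $\pi'$ in \emph{mid-sequent form}: every weak quantifier-block inference occurs strictly below every propositional, structural, and equality inference. This is achieved by repeated local permutations of any non-quantifier rule $R$ that occurs directly below a quantifier-block rule $Q$. The propositional and structural cases are classical; for the equality rules $\ElI, \ElII, \ErI, \ErII$, the active formulas are distinct formula occurrences from the principal $\forall \bar{x}_i F_i$ or $\exists \bar{x}_i F_i$ of $Q$, so the two inferences act on disjoint occurrences and commute. A contraction on the principal quantified formula of two successive $\foralll$-blocks with term vectors $\bar{t}, \bar{t}'$ is absorbed by simply allowing both $\bar{t}$ and $\bar{t}'$ as separate elements of the eventual Herbrand structure. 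Each of these local rewrites preserves $\comq{\cdot}$.

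Once $\pi'$ is in mid-sequent form, the \emph{mid-sequent} $S^\circ$ is the topmost sequent appearing below all propositional/structural/equality inferences; equivalently, $S^\circ$ is $S$ together with one instance $F_i\unsubst{\bar{x}_i}{\bar{t}}$ for every quantifier-block inference of $\pi'$ with term vector $\bar{t}$ on the $i$-th quantified formula. The subproof above $S^\circ$ uses only propositional, structural, and equality rules, so $S^\circ$ is $\LKeq$-derivable and hence E-valid and quantifier-free, i.e.\ a quasi-tautology. Taking $H_i$ to be the set of term vectors $\bar{t}$ used by quantifier-block inferences on the $i$-th quantified formula, the sequent $S^*$ assembled from the $\Fcal_i$ as in Definition~\ref{def.Hseq} is a subsequent of $S^\circ$ and therefore also a quasi-tautology. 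Its size $\sum_{i=1}^{q}|H_i|$ is at most the number of quantifier-block inferences in $\pi'$, which equals $\comq{\pi'}=\comq{\pi}$.

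The main obstacle is the permutation argument itself for $\LKeq$ with quantifier blocks: one has to check that every non-quantifier rule commutes past a $\foralll$- or $\existsr$-block (possibly duplicating context copies) without increasing quantifier complexity, and that the bookkeeping with contractions and with equality inferences that happen to rewrite inside a quantified auxiliary formula does not break the induction. Once the normal form is in place, the extraction of the Herbrand structure and the size bound are routine.
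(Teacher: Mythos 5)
Your proposal follows essentially the same route as the paper: permute inferences in the cut-free proof to reach a Gentzen-style mid-sequent form with all weak quantifier-block inferences below the propositional, structural, and equality inferences, and then read off the Herbrand structure and the size bound from those quantifier blocks. The paper's own proof is only a brief sketch citing Gentzen, and it resolves the one delicate point you flag (equality inferences rewriting inside a quantified formula) by first shifting the equality rules upward until they apply to atoms only, which is compatible with your argument.
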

%
%
\begin{proof}
This result is proven in~\cite{Gentzen34Untersuchungen} for {\LK}, but the proof
for {\LKeq} is basically the same. By permuting the
inference rules, one obtains a proof $\pi'$ from $\pi$ which has an upper part
containing only propositional inferences and the equality rules (which can be
shifted upwards until they are applied to atoms only) and a lower part
containing only quantifier inferences. The sequent between these parts is called
{\em mid-sequent} and has the desired properties.
\end{proof}

$S^*$ can be obtained by tracing the introduction of quantifier-blocks in
the proof, which for every formula $Q \bar{x}_i. F_i$ in the sequent (where
$Q\in\{\forall,\exists\}$) yields a
set of term tuples $H_i$, and then computing the sets of formulas $\Fcal_i$.

The algorithm for introducing cuts described here relies on computing a
compressed representation of the Herbrand structure, which is explained in
Section \ref{sec.grammar}. Note, though, that the Herbrand structure $(H_1, \ldots,
H_q)$ is a list of sets of term tuples (i.e. each $H_i$ is a set of tuples
$\overline{t}$ used to instantiate the formula $F_i$). In order to facilitate
computation and representation, we will add to the language fresh function
symbols $f_1, \ldots, f_q$. Each $f_i$ will be applied to the tuples of the set
$H_i$, therefore transforming a list of sets of tuples into a set of terms. In
this new set, each term will have an $f_k$ as its head symbol, that indicates to
which formula the arguments of $f_k$ belong.

\begin{example}
\label{ex.termset}

Using this new notation, the Herbrand structure $H$ of the previous example is
now represented as the set of terms:
$$T: \{ f_2(a), f_2(fa), f_2(f^2a), f_2(f^3a), f_3(s^3f^2a, a), f_3(s^2f^2a, sa), \ldots, f_3(a, s^3f^2a)\}.$$
\end{example}

Henceforth we will refer to the transformed Herbrand structure as the \emph{term set} of a proof.

\section{Computing a Decomposition}
\label{sec.grammar}

We shall now describe an algorithm for computing a compressed representation
of a term set $T$.
Term sets will be represented by decompositions which are defined as follows:

\begin{definition}
Let $T = \{t_1,\ldots,t_n\}$ be a set of ground terms. A decomposition $D$ of $T$ is a pair, written as
$U \circ_{\bar{\alpha}} W$, where $U$ is a set of terms containing the variables $\alpha_1,\ldots,\alpha_m$,
and $W = \left\{\bar{w}_1 = \colvec{w_{1,1}\\ \vdots\\w_{1,m}},\ldots,\bar{w}_q = \colvec{w_{q,1}\\ \vdots\\ w_{q,m} } \right\}$
is a set of vectors of ground terms s.t.
$T = U \circ_{\bar{\alpha}} W = \{ u[\bar{\alpha} \backslash \bar{w}]\ |\ u \in U, \bar{w} \in W\}$.
The size of a decomposition $U \circ_{\bar{\alpha}} W$ is $|U| + |W|$.
When it is clear that the variables in question are $\alpha_1,\ldots,\alpha_m$, we just write $U \circ W$.
\end{definition}

In~\cite{HetzlXXAlgorithmic} we have given an algorithm that
treats the special case where $m=1$. 
Here, we will extend that approach with a {\em generalized $\Delta$-vector}
\DeltaG, which, together with a so-called $\Delta$-table, can compute
decompositions with an arbitrary $m$.
\DeltaG, given in Algorithm~\ref{alg.DeltaG}, computes a {\em simple decomposition}, i.e.\ a
decomposition with only one term in $U$. The $\Delta$-table stores such decompositions
and builds more complex ones out of them. Due to space reasons, the algorithm can
only be sketched here, for details the interested reader is referred to the 
technical report~\cite{deltavec.report}.

\begin{definition}
  Let $T$ be a term set. The $\Delta$-table for $T$ is a list of key/value-entries, where each
  entry is of the form $W \Rightarrow (U_\Delta = \{ (u_1,T_1),\ldots,(u_n,T_n) \})$,
  where $W$ is a list of ground term vectors, $u_i$ is a term containing variables,
  and $T_i$ is a subset of $T$ s.t. the following two conditions are satisfied:

  \begin{enumerate}
    \item For every entry $W \Rightarrow \{ (u_1,T_1),\ldots,(u_n,T_n) \}$, $\{u_i\} \circ W$ is a
          decomposition of $T_i$ (for $1 \leq i \leq n$).
    \item For every $T' \subseteq T$, there is a pair $W \Rightarrow U_\Delta$ in the $\Delta$-table s.t.
          $(u,T') \in U_\Delta$.
  \end{enumerate}
\end{definition}
\begin{algorithm}
\caption{Generalized $\Delta$-vector \DeltaG}
\label{alg.DeltaG}
\begin{algorithmic}
\Function{\DeltaG}{$t_1,\ldots,t_n$: a list of terms}
  \State \Return transposeW(\DeltaG'($t_1,\ldots,t_n$))
\EndFunction
\Function{\DeltaG'}{$t_1,\ldots,t_n$: a list of terms}
  \If{$t_1 = t_2 = \ldots = t_n \land n > 0$}
    \Comment case 1: all terms identical
    \State \Return $(t_1,())$
  \ElsIf{$t_i = f(t_1^i,\ldots,t_m^i)$ for $1\leq i\leq n$}
    \Comment case 2: recurse
    \State $(\bar{w}_1,\dots,\bar{w}_q) \gets \bigsqcup\limits_{1 \leq j \leq m} \pi_2(\Delta_G(t_j^1,\ldots,t_j^n))$
    \Comment $\bigsqcup \equiv$ concatenation
    \State $u_j \gets \pi_1(\Delta_G(t_j^1,\ldots,t_j^n))$ for all $j \in \{1,\ldots,m \}$
    \State \Return merge($f(u_1,\ldots,u_m), (\bar{w}_1,\ldots,\bar{w}_q)$)
    \Comment merge all $\alpha_i$, $\alpha_j$ where $\bar{w}_i = \bar{w}_j$
  \Else
    \Comment case 3: introduce new $\alpha$
    \State \Return $(\alpha_\mt{FRESH}, (t_1,\ldots,t_n))$
  \EndIf
\EndFunction
\end{algorithmic}
\end{algorithm}
We build the $\Delta$-table as follows: for every $T' \subseteq T$, we compute
$\Delta_G(T') = (u, W)$ and insert $(u,T')$ with the key $W$ (if an entry
$W \Rightarrow U_\Delta$ already exists, we replace it with $W \Rightarrow U_\Delta \cup \{(u,T')\}$).
We then iterate over the $\Delta$-table and, for each entry $W \Rightarrow U_\Delta$,
try to find a subset $\{(u_{i_1},T_{i_1}),\ldots,(u_{i_n},T_{i_n})\}$ of $U_\Delta$ s.t.
$\{u_{i_1},\ldots,u_{i_n}\} \circ W$ is a decomposition of $T$. This is called {\em folding} the $\Delta$-table.

\begin{theorem}[Soundness and completeness]
\label{theo.decompositionCorrectness}
Let $T$ be a term set. If $U \circ W$ is extracted from folding the $\Delta$-table,
then $U \circ W$ is a decomposition of $T$. 
Conversely, if there exists a decomposition $U \circ W$ of $T$ of size $n$, folding the $\Delta$-table
will return at least one decomposition of size $n' \leq n$.
\end{theorem}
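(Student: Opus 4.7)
The plan is to handle soundness first (which is nearly immediate) and then completeness (which is the harder direction and requires leveraging the canonical nature of $\Delta_G$).

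\emph{Soundness.} I would unfold the folding procedure directly. When folding extracts $U \circ W$ from an entry $W \Rightarrow U_\Delta$ of the $\Delta$-table, it has, by the very criterion quoted in the paragraph preceding the theorem, selected a subset $\{(u_{i_1},T_{i_1}),\ldots,(u_{i_n},T_{i_n})\}$ of $U_\Delta$ such that the resulting object is a decomposition of $T$. Combined with condition~1 of the $\Delta$-table, namely that each $\{u_{i_j}\} \circ W$ is a decomposition of $T_{i_j}$, we obtain
\[
U \circ W \;=\; \bigcup_{j=1}^{n} \bigl(\{u_{i_j}\} \circ W\bigr) \;=\; \bigcup_{j=1}^{n} T_{i_j} \;=\; T,
\]
so $U \circ W$ is indeed a decomposition of $T$. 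This part is essentially bookkeeping from the two table conditions.

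\emph{Completeness.} Assume $U \circ W$ is a decomposition of $T$ of size $n = |U| + |W|$. For each $u \in U$ let $T_u := \{u\} \circ W \subseteq T$; these subsets cover $T$ and each admits $\{u\} \circ W$ as a simple decomposition of size $1 + |W|$. By condition~2 of the $\Delta$-table, each $T_u$ appears as the second component of some pair $(u^*_u, T_u)$ stored under the key $W_u$ produced by $\Delta_G(T_u)$. The goal is then to exhibit a single key $W^\star$ and a selection of pairs under that key whose total size is bounded by $n$.

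The two bounds to establish are $|W^\star| \le |W|$ and that at most $|U|$ pairs suffice to cover $T$. Both rest on the key property of $\Delta_G$, namely that it computes a size-minimal simple decomposition by iterated anti-unification (case~1--3 of Algorithm~\ref{alg.DeltaG}); this is the vector generalization of the $m=1$ argument of~\cite{HetzlXXAlgorithmic} and the place where I would cite the technical report~\cite{deltavec.report} for the full inductive proof. From this, $|W_u| \le |W|$ holds for every $u \in U$.

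\emph{The main obstacle} is that the $W_u$'s computed for different $u \in U$ need not coincide, so one cannot naively choose a common key. I would resolve this by exploiting the structural uniformity of the $T_u$'s: all of them are instantiations, via substitutions drawn from the same $W$, of templates having the same variable signature. Consequently the canonical construction performed by $\Delta_G$ must, on each $T_u$, return a $W_u$ that refines (or equals) the columns of $W$, up to merging of duplicate columns handled by the merge-step of $\Delta_G$. One can then identify a key $W^\star$ in the table (concretely, the one produced by $\Delta_G$ on a set of the form $\{u\} \circ W$ whose anti-unifier uses the full set of variables of $W$) under which pairs $(u^*_{u_1},T_{u_1}),\ldots,(u^*_{u_{|U|}},T_{u_{|U|}})$ are collected, covering $T$. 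Folding then returns a decomposition of size at most $|U| + |W^\star| \le |U| + |W| = n$, as required.
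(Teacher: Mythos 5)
Your soundness argument is fine; it is essentially immediate, since folding only returns sets of pairs that it has explicitly verified to form a decomposition of $T$, and condition~1 of the $\Delta$-table gives the identity $U\circ W=\bigcup_{j}(\{u_{i_j}\}\circ W)=\bigcup_j T_{i_j}$. The gap is in the completeness direction, and it sits exactly at the point you yourself flag as ``the main obstacle''. You assert that $\Delta_G$ returns, on each $T_u=\{u\}\circ W$, a key $W_u$ that ``refines (or equals) the columns of $W$'', and that one can therefore locate a single key $W^\star$ under which pairs covering all of the $T_u$ are simultaneously stored. Neither claim is justified, and the second is false for the table construction as described: each subset $T'\subseteq T$ is inserted under the one key produced by $\Delta_G(T')$, and these keys genuinely diverge across the $u\in U$. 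If $u$ is ground, $\Delta_G(T_u)$ yields the empty key; if $u$ contains only $\alpha_1$, the key of $T_u$ is built from first components of $W$ only, with duplicates collapsed. For $U=\{f\alpha_1,\,g\alpha_2\}$ and $W=\{(a,b),(c,d)\}$ the two subsets $\{fa,fc\}$ and $\{gb,gd\}$ land under the incomparable keys $\{(a),(c)\}$ and $\{(b),(d)\}$, no single entry contains pairs covering both, and there is no $u\in U$ ``whose anti-unifier uses the full set of variables of $W$'', so your concrete choice of $W^\star$ is not even well defined. Your per-subset bound $|W_u|\le|W|$ (which does follow from the least-general-generalization property of $\Delta_G$, via the surjection $W\to W_u$) does not imply the existence of a common key.

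What a complete proof must supply --- and what the paper's own Example~\ref{ex.DeltaG} already betrays, since it stores both $(f_2(\alpha_1),\{t_1,t_3\})$ and $(f_2(\alpha_2),\{t_1,t_3\})$ under a two-column key even though $\Delta_G(\{t_1,t_3\})$ produces a one-column key --- is an additional mechanism that re-indexes or pads simple decompositions so that pairs whose $\Delta_G$-keys are projections or sub-vectors of a key $W^\star$ also become retrievable under $W^\star$, together with a normalization argument showing that an arbitrary decomposition $U\circ W$ can be transformed, without increasing $|U|+|W|$, into one all of whose components live in a single entry of the table. Your proof neither states nor establishes this alignment of keys across different $u\in U$, which is the actual mathematical content of the completeness claim; the appeal to the technical report covers only the minimality of each individual simple decomposition, not this global step.
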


\begin{proof}See Appendix.\end{proof}

In fact, a stronger result holds: for every decomposition, there exists a unique normal form,
and folding the $\Delta$-table will only return decompositions in such normal form. For details,
see~\cite{deltavec.report}. To illustrate the algorithm, we compute a decomposition of the term
set of Example~\ref{ex.termset}. We remark that, for our cut-introduction method, we are interested
in a {\em decomposition $(U_1,\ldots,U_q)\circ W$ of a Herbrand structure} $H=(H_1,\ldots,H_q)$
which has the property that
$H_j=\{u[\bar{\alpha} \backslash \bar{w}] \mid u\in U_j, \bar{w} \in W\}$.
This is trivially obtained from a decomposition $U\circ W$ of the term set of a Herbrand structure
by setting $U_j=\{u\mid f_j(u)\in U\}$. 

\begin{example}\label{ex.DeltaG}

Let $T = T_2 \cup T_3$ with
{\small
$$
\begin{array}{l}
T_2 = \{ t_1 = f_2(a), t_2 = f_2(fa), t_3 = f_2(f^2a), t_4 = f_2(f^3a)\}\\
T_3 = \{t_5 = f_3(s^3f^2a,a), t_6 = f_3(s^2f^2a,sa),  \ldots, t_{12} = f_3(a,s^3f^2a) \}
\end{array}
$$ 
}
be a term set corresponding to the Herbrand structure $H = (H_1,H_2,H_3,H_4)$:
{\small
$$
\begin{array}{l}
H_1 = \emptyset$, $H_4 = \emptyset, H_2 = \{a,fa,f^2a,f^3a\},\\
H_3 = \{(s^3f^2a,a), (s^2f^2a,sa), \ldots, (a,s^3f^2a)\}
\end{array}$$
}

We now compute \DeltaG\ for every subset of $T$ --- consider for instance the subset $T' = \{ f_3(s^3f^2a,a), f_3(s^2f^2a,sa) \} \subseteq T$:
{\small
$$
  \Delta_G(f_3(s^3f^2a,a), f_3(s^2f^2a,sa)) = (f_3(s^2\alpha_1, \alpha_2), \left\{\colvec{sf^2a\\a}, \colvec{f^2a\\sa}\right\}) = (u,W).
$$
}
If the $\Delta$-table already has an entry $W \Rightarrow U_\Delta$, we add $(u,T')$
to $U_\Delta$. If not, we insert a new entry $W \Rightarrow \{(u,T')\}$.
After \DeltaG\ has been computed for all subsets, we iterate through it, looking
for simple decompositions that can be composed into a decomposition of $T$. We find the entry

{\small
$$
\begin{array}{l l l}
W &\Rightarrow& U_\Delta^1 \cup U_\Delta^2\\
U_\Delta^1 &=& \{ (f_2(\alpha_1),\{t_1,t_3\}),\ (f_2(f\alpha_1), \{t_2,t_4\}),\ (f_2(\alpha_2), \{t_1,t_3\})\ (f_2(f\alpha_2), \{t_2,t_4\})\}\\
U_\Delta^2 &=& \{ (f_3(s^3\alpha_1,\alpha_2), \{t_5, t_9\}),\ (f_3(s^2\alpha_1,s\alpha_2), \{t_6, t_{10}\}),\\
& & \ \ (f_3(s\alpha_1,s^2\alpha_2), \{t_7, t_{11}\}),\ (f_3(\alpha_1,s^3\alpha_2), \{t_8, t_{12}\})\}\\
W &=& \left\{\colvec{f^2a\\a}, \colvec{a\\f^2a}\right\}.
\end{array}
$$
}
and can see that $U_\Delta^1 \circ W = T_2$ and $U_\Delta^2 \circ W = T_3$
Therefore, $(U_\Delta^1 \cup U_\Delta^2) \circ W$ is a decomposition of $T$.
We then translate this decompositions $T$ back into a decomposition of $H$ by removing the
function symbols $f_2$ and $f_3$ from $U_\Delta^1$ \& $U_\Delta^2$:
{\small
\begin{eqnarray*}
U &=& (U_1,U_2),\\
U_1 &=& \{\alpha_1,\ f\alpha_1,\ \alpha_2,\ f\alpha_2\},\\
U_2 &=& \{(s^3\alpha_1,\alpha_2),\ (s^2\alpha_1,s\alpha_2),\ (s\alpha_1,s^2\alpha_2),\ (\alpha_1,s^3\alpha_2)\},\\
W &=& \left\{\colvec{f^2a\\a}, \colvec{a\\f^2a}\right\}.
\end{eqnarray*}
}
\end{example}
%
%
%
%
%
%
%
%

\section{Computing a Cut-Formula}

After having computed a decomposition as described in Section~\ref{sec.grammar},
the next step consists in computing a cut-formula based on that decomposition.
A decomposition $D$ specifies the instances of quantifier blocks in a
proof with a $\forall$-cut, but does not contain information about the
propositional structure of the cut formula to be constructed. The problem to
find the appropriate propositional structure is reflected in the following
definition.

\begin{definition}\label{def.sextHseq}
Let $S$ be a $\Sigma_1$-sequent and $F_i, k_i$ as in
Definition~\ref{def.sigmaIseq}, $H$ be a Herbrand structure for $S$, 
and $D\colon U \circ W$
a decomposition of $H$ with $V(D) = \{\alpha_1,\ldots,\alpha_n\}$. Let $U = (U_1, \ldots , U_q)$  and $W = \{\bar{w}_1, \ldots, \bar{w}_k\}$, where the $\bar{w}_j$
are $n$-vectors of terms not containing variables in $V(D)$, and $\Fcal'_i = \{F_i\unsubst{\bar{x}_i}{\bar{t}} \mid \bar{t} \in U_i\}$
for $k_i > 0$ and $\Fcal'_i = \{ F_i \}$ for $k_i = 0$. Furthermore let $X$ be an $n$-place
predicate variable. Then the sequent
$$S^\sim:\;\; X\bar{\alpha} \impl \Land\nolimits^k_{i=1}X \bar{w}_i, \Fcal'_1, \ldots , \Fcal'_p \seq \Fcal'_{p+1}, \ldots, \Fcal'_q.$$
is called a {\em schematic extended Herbrand sequent} of $S$ w.r.t. $D$. The
{\em size} of $S^\sim$, denoted by $\com{S^\sim}$, is defined as $k+
\sum^q_{i=1}|U_i|$.
\end{definition}

\begin{definition}\label{def.sextHsol}
Let $S^\sim$ be a schematic extended Herbrand sequent of $S$ w.r.t.~a
decomposition $D$ as in Definition~\ref{def.sextHseq} and $A$ be a formula with
$V(A) \IN \{\alpha_1,\ldots,\alpha_n\}$. Then the second-order substitution
$\sigma\colon \unsubst{X}{\lambda \bar{\alpha}.A}$ is a solution of $S^\sim$ if
$S^\sim\sigma$ is a quasi-tautology; in this case $S^\sim\sigma$ is called an
{\em extended Herbrand sequent}. The size of $S^\sim\sigma$ is defined as
$\com{S^\sim}$.
\end{definition}
Theorem \ref{theo.extHseq} in Section \ref{sec.proof_build} shows that, from a
solution of a schematic extended
Herbrand sequent $S^\sim$ of $S$, we can define a proof $\psi$ of $S$ with a
$\forall$-cut and $\comq{\psi} = \com{S^\sim}$.
The question remains whether every schematic extended Herbrand sequent is
solvable. We show below that this is indeed the case.

Let $S^\sim$ as in Definition~\ref{def.sextHseq}. We define
\[F[l] = \Land \Union\nolimits^p_{i=1} \Fcal'_i  \ \mbox{and}\ F[r] = \Lor\Union\nolimits^q_{i=p+1} \Fcal'_i.\]
\begin{definition}\label{def.canonsubs}
Let $S^\sim$ be a schematic extended Herbrand sequent of $S$ as in Definition~\ref{def.sextHseq}. We define the {\em canonical formula} $C(S^\sim)$ of $S^\sim$ as $F[l]
\land \neg F[r]$. The substitution $\unsubst{X}{\lambda \bar{\alpha}.C(S^\sim)}$ is called the {\em canonical substitution} of $(S,S^\sim)$.
\end{definition}
\begin{theorem}\label{theo.canonsol}
Let $S$ be a $\Sigma_1$-sequent, and $S^\sim$ be a schematic extended Herbrand
sequent of $S$. Then the canonical substitution is a solution of $S^\sim$.
\end{theorem}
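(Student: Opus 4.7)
The plan is to verify semantically that $S^\sim\sigma$ is a quasi-tautology, where $\sigma = \unsubst{X}{\lambda\bar{\alpha}.C(S^\sim)}$ and we abbreviate $C := C(S^\sim) = F[l] \land \neg F[r]$. Applying $\sigma$ replaces the schematic implication $X\bar{\alpha} \impl \Land_{i=1}^k X\bar{w}_i$ by $C \impl \Land_{i=1}^k C\unsubst{\bar{\alpha}}{\bar{w}_i}$, while leaving the formulas $\Fcal'_1,\ldots,\Fcal'_q$ untouched. I would fix an arbitrary model $M$ of equality together with an assignment $\beta$ for the free variables $\bar{\alpha}$, and argue by case analysis that the resulting ground sequent holds under $(M,\beta)$.

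The trivial cases are those handled by the non-cut part of the sequent: if some $\Fcal'_i$ with $i\leq p$ is false at $(M,\beta)$, or some $\Fcal'_j$ with $j>p$ is true there, then $S^\sim\sigma$ is already satisfied. Otherwise every antecedent $\Fcal'_i$ holds and every succedent $\Fcal'_j$ fails at $(M,\beta)$; equivalently, $F[l]$ is true and $F[r]$ is false at $(M,\beta)$, so $C$ itself is true at $(M,\beta)$. In this remaining case it suffices to falsify the cut implication, that is, to exhibit an index $i$ such that $C\unsubst{\bar{\alpha}}{\bar{w}_i}$ (a ground formula, since $\bar{w}_i$ contains no free variables) is false in $M$.

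For contradiction, assume $C\unsubst{\bar{\alpha}}{\bar{w}_i}$ is true in $M$ for every $i$; then $F[l]\unsubst{\bar{\alpha}}{\bar{w}_i}$ is true and $F[r]\unsubst{\bar{\alpha}}{\bar{w}_i}$ is false in $M$ for all $i$. Here I would invoke the bookkeeping identity inherited from $D$ being a decomposition of $H$, namely $\Fcal_i = \Union_{\bar{w}\in W}\Fcal'_i\unsubst{\bar{\alpha}}{\bar{w}}$, to rewrite the original Herbrand sequent of $S$ as the semantic consequence $\Land_{i=1}^k F[l]\unsubst{\bar{\alpha}}{\bar{w}_i} \models \Lor_{i=1}^k F[r]\unsubst{\bar{\alpha}}{\bar{w}_i}$ in every equality model. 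Since $H$ is a Herbrand structure, the Herbrand sequent is a quasi-tautology, so this entailment applies in $M$ and forces some $F[r]\unsubst{\bar{\alpha}}{\bar{w}_i}$ to be true, contradiction. The only step I expect to require real care is this semantic translation via the decomposition identity; the rest is a standard classical case split, and degenerate situations with empty antecedent or succedent (where $F[l]$ or $F[r]$ collapses to $\top$ or $\bot$) require no extra argument.
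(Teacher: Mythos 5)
Your proof is correct and follows essentially the same route as the paper's: both reduce the claim to the E-unsatisfiability of $\Land_{i=1}^k C(S^\sim)\unsubst{\bar{\alpha}}{\bar{w}_i}$, which is then identified with the (quasi-tautological) Herbrand sequent $S^*$ via the decomposition identity. The only difference is presentational --- you argue model-by-model where the paper chains sequent equivalences $S_1 \equiv S_2$, reduces to $S_3$, and identifies $S_3$ with $S^*$ --- so there is nothing substantive to flag.
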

\begin{proof}
Let $S^\sim$ be  a schematic extended Herbrand sequent as in Definition~\ref{def.sextHseq} and $C(S^\sim)$ be the canonical formula of $S^\sim$. We have to prove that
$$S_1:\;\; C(S^\sim)(\bar{\alpha}) \impl \Land\nolimits^k_{i=1} C(S^\sim)(\bar{w}_i),\  F[l] \seq F[r]$$
is a quasi-tautology. But, by definition of $C(S^\sim)$, $S_1$ is equivalent to
$$S_2:\;\; (F[l] \land \neg F[r]) \impl \Land\nolimits^k_{i=1}(F[l] \land \neg F[r])(\bar{w}_i), (F[l] \land \neg F[r]) \seq.$$
Clearly $S_2$ is a quasi-tautology if the sequent $S_3$, defined as
$$S_3:\;\; \Land\nolimits^k_{i=1}(F[l] \land \neg F[r])(\bar{w}_i)\seq$$
is a quasi-tautology. But, by $D = U \circ W$ being a decomposition of $H$, $S_3$ is logically equivalent to the Herbrand sequent $S^*$ defined over $H$, which (by definition) is
a quasi-tautology.
\end{proof}

\begin{example}\label{ex.sextHseq}
Let
$$S:\;\; P(f^4a,a), \forall x.fx = s^2x, \forall xy (P(sx,y) \impl P(x,sy)) \seq P(a,f^4a)$$
like in Example~\ref{ex.Hseq} and $D$ be the decomposition $U \circ W$ of $H$
constructed in Example~\ref{ex.DeltaG}. We have
{\small
\begin{eqnarray*}
U &=& (U_1,U_2),\\
U_1 &=& \{\alpha_1,\ f\alpha_1,\ \alpha_2,\ f\alpha_2\},\\
U_2 &=& \{(s^3\alpha_1,\alpha_2),\ (s^2\alpha_1,s\alpha_2),\ (s\alpha_1,s^2\alpha_2),\ (\alpha_1,s^3\alpha_2)\},   \\
W &=& \left\{\colvec{f^2a\\a}, \colvec{a\\f^2a}\right\}.
\end{eqnarray*}
}
The corresponding schematic extended Herbrand sequent $S^\sim$ is
\[
\begin{array}{l}
X(\alpha_1,\alpha_2) \impl (X(f^2a,a) \land X(a,f^2a)),\\
f\alpha_1 =s^2\alpha_1,\ f^2\alpha_1 = s^2f\alpha_1,\ f\alpha_2 = s^2\alpha_2,\ f^2\alpha_2 = s^2f\alpha_2,\\
P(s^4\alpha_1,\alpha_2) \impl P(s^3\alpha_1,s\alpha_2),\ P(s^3\alpha_1,s\alpha_2) \impl P(s^2\alpha_1,s^2\alpha_2), P(s^2\alpha_1,s^2\alpha_2) \impl P(s\alpha_1,s^3\alpha_2),\\
P(s\alpha_1,s^3\alpha_2) \impl P(\alpha_1,s^4\alpha_2), P(f^4a,a) \seq P(a,f^4a).
\end{array}
\]
Its canonical formula $C(S^\sim)$ which we write as $A(\alpha_1,\alpha_2)$ is
\[
\begin{array}{l}
\Land^2_{i=1}(f\alpha_i =s^2\alpha_i \land\ f^2\alpha_i = s^2f\alpha_i)  \land \\
\Land^3_{i=0}(P(s^{4-i}\alpha_1,s^{i}\alpha_2) \impl P(s^{4-i-1}\alpha_1,s^{i+1}\alpha_2)) \land  P(f^4a,a) \land \neg P(a,f^4a).
\end{array}
\]
The canonical solution is $\unsubst{X}{\lambda \alpha_1
\alpha_2.A(\alpha_1,\alpha_2)}$ and the corresponding Herbrand sequent $S'$ is
like $S^\sim$ with $X(\alpha_1,\alpha_2) \impl (X(f^2a,a) \land X(a,f^2a))$
replaced by $A(\alpha_1,\alpha_2) \impl (A(f^2a,a) \land A(a,f^2a))$.
Note that $\com{S'} = 10$, while $\com{S^*} = 12$. So we obtained a compression of quantifier complexity.

\end{example}

\subsection{Improving the solution}

In the last section, we have shown that, given a decomposition $D$
of the termset of a cut-free proof of a $\Sigma_1$-sequent $S$, there exists a canonical solution
to the schematic extended Herbrand sequent induced by $S,D$, which gives rise to a proof with
a $\forall$-cut. The canonical solution
need not be the best solution for a given purpose; indeed it is often not symbol-minimal, for
example. Hence this section is devoted to describing
an algorithm for finding better solutions. We will consider 
E-validity of quantifier-free formulas $F$
containing free variables; by ``$F$ is E-valid'' we mean to say ``the universal closure of $F$ is E-valid''.
Throughout this section, we consider 
a fixed $\Sigma_1$-sequent $S$ using the notation of Definition~\ref{def.sigmaIseq},
a fixed decomposition $D=(U_1,\ldots,U_q)\circ W$, with 
$W=\{\bar{w}_i\mid 1\leq i \leq k\}$, of a Herbrand structure $H$
of $S$, along with the schematic extended Herbrand sequent 
%
$S^\sim$ induced by $S,H,D$, using the notation of Definition~\ref{def.sextHseq}. 
We will abbreviate 
$\Fcal'_1 \union \cdots \union \Fcal'_p$ by $\Gamma$ and $\Fcal'_{p+1}\union \cdots\union \Fcal'_q$ by $\Delta$,
and write ``$A$ is a solution'' for ``$\unsubst{X}{\lambda \bar{x}.A}$ is a solution for $S^\sim$'' (note that
we will consider the names $\bar{x}$ fixed). In this section, we will focus our attention on solutions in conjunctive
normal form (CNF), which always exist since the solution property is semantic 
(if $A$ is a solution and $A\Leftrightarrow B$
is E-valid, then $B$ is a solution). A clause $C$ is said to be $\bar{x}$-free if it contains no symbol from $\bar{x}$.

%

The algorithm we will present will involve generating E-consequences of formulas. 
Although in principle an abstract analysis of our algorithm based on a notion of 
{\em E-consequence generator} can be performed, we have chosen, for lack of space,
to present only the concrete E-consequence generator used in our implementation.

We now present this E-consequence generator, which is based on {\em forgetful reasoning}.
Let $C_1,C_2$ be two clauses, then denote
the set of propositional resolvents of $C_1, C_2$ by $\res(C_1, C_2)$ and the set of clauses that can
be obtained from $C_1,C_2$ by ground paramodulation by $\para(C_1, C_2)$.
Letting $F$ be a formula with CNF $\{C_i\}_{i\in I}$ we define
\[
  \forgetOp(F) = \{ C\wedge\bigwedge\nolimits_{i\in I\setminus\{j,k\}} C_i \mid C\in\res(C_j,C_k)\cup\para(C_j,C_k)\}.
\]

Using $\forgetOp$, we can now present Algorithm~\ref{alg:simp_sol}: the solution-finding algorithm $\SF{\forgetOp}$.
It prunes a solution $A$ of $\bar{x}$-free clauses, then recurses upon those consequences of the pruned 
$A$ generated by $\forgetOp$
which pass a certain E-validity check, finally returning a set of formulas (which will all be solutions).
\begin{algorithm}
\caption{\SF{\forgetOp}}
\label{alg:simp_sol}
\begin{algorithmic}
\Function{\SF{\forgetOp}}{$A$: solution in CNF}
\State $A \gets A$ without $\bar{x}$-free clauses 
  \State $S \gets \{A\}$
  \For{$B \in \forgetOp(A)$}
  \If{$B\unsubst{\bar{x}}{\bar{w}_1},\ldots,B\unsubst{\bar{x}}{\bar{w}_k},\Gamma\seq\Delta$ is E-valid}
      \Comment $B$ is a solution
      \State $S \gets S\cup \SF{\forgetOp}(B)$
    \EndIf
  \EndFor
  \State \Return $S$
\EndFunction
\end{algorithmic}
\end{algorithm}
We have the following result, which is derived essentially from the algebraic structure
of the solution space which is sketched in the following section.
\begin{theorem}[Soundness \& Termination]\label{thm:sf_sound_complete}
Let $A$ be any solution in CNF.
Then $\SF{\forgetOp}$ terminates on $A$ and, for all $B\in\SF{\forgetOp}(A)$,
$B$ is a solution.
\end{theorem}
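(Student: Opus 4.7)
The plan is to treat termination and soundness separately, with the latter being the main task. Termination follows from a straightforward measure argument: whenever $B \in \forgetOp(A)$, the CNF of $B$ has exactly one fewer clause than that of $A$, since two clauses $C_j, C_k$ of $A$ are replaced by a single resolvent or paramodulant. Preprocessing only removes clauses, and for every fixed $A$ the set $\forgetOp(A)$ is finite, so the recursion tree has bounded depth and finite branching, hence is finite.

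For soundness I would induct on recursion depth, reducing to two semantic lemmas. Lemma 1 (pruning): if $A$ is a solution and $C$ is an $\bar{x}$-free clause of its CNF, then $A$ with $C$ removed is also a solution. To see this, fix a model $M$ and an assignment to $\bar{\alpha}$ making $\Gamma$ true and $\Delta$ false. Writing $A = B \wedge C$, the solution property of $A$ forces $A(\bar{\alpha}) \impl \bigwedge_{i=1}^{k} A(\bar{w}_i)$ to be false in $M$, so $A(\bar{\alpha})$ is true (hence $B(\bar{\alpha})$ and $C$ are both true, using that $C$ is $\bar{x}$-free) and some $A(\bar{w}_i)$ is false (hence $B(\bar{w}_i)$ is false, using $C$ true). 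Therefore $B(\bar{\alpha}) \impl \bigwedge_i B(\bar{w}_i)$ is already false in $M$, so $B$'s solution sequent holds vacuously there. Iterating removes all $\bar{x}$-free clauses.

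Lemma 2 (E-validity check): if $A$ is a solution, $B \in \forgetOp(A)$, and the check $B(\bar{w}_1), \ldots, B(\bar{w}_k), \Gamma \seq \Delta$ is E-valid, then $B$ is a solution. Three ingredients combine. First, soundness of propositional resolution and ground paramodulation gives that $A \impl B$ is E-valid, so whenever $A(\bar{\alpha})$ holds so does $B(\bar{\alpha})$. Second, the check says that in any model with $\Gamma$ true and $\Delta$ false some $B(\bar{w}_i)$ must be false. Third, in such a model $A$'s solution property makes $A(\bar{\alpha})$, and therefore $B(\bar{\alpha})$, true. Consequently $B(\bar{\alpha}) \impl \bigwedge_i B(\bar{w}_i)$ is again false and $B$'s solution sequent holds vacuously. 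The induction is then immediate: the preprocessed $A$ is a solution by Lemma 1, every $B$ accepted by the check is a solution by Lemma 2, and the inductive hypothesis applies to the recursive call $\SF{\forgetOp}(B)$.

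The main obstacle is Lemma 2: the E-validity check in isolation is strictly weaker than $B$'s full solution sequent, since it imposes no constraint in models where $B(\bar{\alpha})$ would be false. What closes the gap is that $B$ is not an arbitrary formula passing the check but is produced from a genuine solution $A$ by sound inference; this provenance forces $B(\bar{\alpha})$ to be true in precisely those models that the check alone cannot handle.
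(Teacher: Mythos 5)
Your proof is correct. Note that the paper does not actually write out a proof of this theorem: it only remarks that the result ``is derived essentially from the algebraic structure of the solution space'' and points to the following subsection, where the solutions are shown to form a convex sub-semilattice of the Lindenbaum algebra with the canonical solution as least element. Your argument is precisely that structure made explicit and elementary. The solution property for $B$ factors into (a) $\Gamma\seq\Delta, B(\bar{x})$ being E-valid and (b) $B\unsubst{\bar{x}}{\bar{w}_1},\ldots,B\unsubst{\bar{x}}{\bar{w}_k},\Gamma\seq\Delta$ being E-valid; condition (a) is upward closed under E-consequence and hence inherited from $A$ by soundness of resolution and ground paramodulation, while condition (b) is exactly the algorithm's check --- this is your Lemma~2, and it is the same observation that justifies the paper's description of $\SF{\forgetOp}$ as ``searching upwards through the solution semilattice.'' Your pruning lemma and the clause-counting termination measure are the remaining pieces one would have to supply in any case (the ``exactly one fewer clause'' should strictly be ``at least one fewer,'' since a resolvent may coincide with a retained clause, but the measure argument is unaffected). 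You also correctly identify the crux: the check alone does not imply the solution property, and the missing half comes from $B$ being an E-consequence of a genuine solution. In short: correct, and essentially the paper's intended (but unwritten) argument.
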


\begin{example}
  Consider the canonical formula $C(S^\sim)$ of Example~\ref{ex.sextHseq}.
  Then $\SF{\forgetOp}$ generates the CNF
  \[
    F(\alpha_1,\alpha_2):\; f^2\alpha_1 = s^4\alpha_1 \land f^2\alpha_2=s^4\alpha_2 \land (\neg P(s^4\alpha_1,\alpha_2) \lor P(\alpha_1,s^4\alpha_2))
  \]
  for the CNF of $C(S^\sim)$ by applying paramodulation twice to equational atoms and resolution
  thrice to the clauses corresponding to the implications between the $P$-atoms. It can be
  checked that $\lambda\alpha_1\alpha_2.F(\alpha_1,\alpha_2)$ is a solution for $S^\sim$
  which is smaller than the canonical solution.
\end{example}
\subsection{The solution space}
This section is dedicated to describing the space of solutions. 
The following result summarizes the algebraic properties of
the solution space which are exploited in $\SF{\forgetOp}$;
in particular these properties allow one to prove
the correctness of $\SF{\forgetOp}$ 
in the sense of Theorem~\ref{thm:sf_sound_complete}.
By $\formulas$ we denote the set of propositional formulas built from the
atoms of the canonical solution of $S^{\sim}$.
\begin{theorem}
Define 
the equivalence relation $\sim$ on $\formulas$ by setting
$F\sim G$ iff $F\Leftrightarrow G$ is E-valid (i.e.~$\formulas/{\sim}$ is
the Lindenbaum-Tarski algebra of $\formulas$ w.r.t.~the theory of equality).
Let $\solutions\subseteq\formulas$ 
be the set of solutions in that signature.
Then $\mathcal{B}=(\formulas/{\sim}, \land, \lor, \neg, \bot, \top)$ is a Boolean algebra,
$(\solutions/{\sim}, \land)$ is a convex subalgebra of the meet-semilattice reduct
$(\formulas/{\sim}, \land)$ of $\mathcal{B}$, and the canonical solution $C$
is the least element of $(\solutions/{\sim}, \land)$.
\end{theorem}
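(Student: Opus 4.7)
The plan is to verify each of the three assertions in turn, each one following from a direct semantic analysis.

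For the Boolean algebra claim I would simply appeal to the standard Lindenbaum--Tarski construction, relativised to the theory of equality instead of pure propositional logic: the relation $\sim$ is a congruence with respect to $\land,\lor,\neg$ because E-validity is preserved under replacement of equivalents, and every Boolean algebra axiom, being a propositional tautology, is in particular E-valid, so the equivalence classes inherit the Boolean operations. The top element is the class of E-valid formulas and the bottom is the class of E-unsatisfiable formulas.

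For the subalgebra/convexity claim I would first note that $\sim$-equivalent formulas are solutions simultaneously, since the solution condition of Definition~\ref{def.sextHsol} is purely semantic, so $\solutions/{\sim}$ is a well-defined subset. To see closure under $\land$, suppose $A$ and $B$ are solutions and fix a model together with an assignment to $\bar{\alpha}$ satisfying the hypotheses $(A\land B)(\bar{\alpha}) \impl \bigwedge_i (A\land B)(\bar{w}_i)$ and $\Gamma$; I then split on whether $A(\bar{\alpha})$ holds: either $A(\bar{\alpha})$ fails, so $A(\bar{\alpha})\impl\bigwedge_i A(\bar{w}_i)$ is vacuously true, or $A(\bar{\alpha})$ holds and hence $B(\bar{\alpha})$ holds as well, whence the hypothesis on $A\land B$ forces $\bigwedge_i A(\bar{w}_i)$. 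In either case the solution property of $A$ yields some element of $\Delta$, proving that $A\land B$ is a solution. For convexity, given $A\leq B\leq C$ with $A,C$ solutions, I again split on $A(\bar{\alpha})$: if $A(\bar{\alpha})$ holds then $B(\bar{\alpha})$ and hence $\bigwedge_i B(\bar{w}_i)$ holds by the hypothesis on $B$, and $B\leq C$ promotes this to $\bigwedge_i C(\bar{w}_i)$, making the antecedent of the solution condition for $C$ true in its second disjunct and yielding $\bigvee\Delta$ via the solution property of $C$; if $A(\bar{\alpha})$ fails, then the antecedent of the solution condition for $A$ is trivially true and the solution property of $A$ directly yields $\bigvee\Delta$.

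Finally, that the canonical solution $C(S^\sim)=F[l]\land \neg F[r]$ is indeed a solution is Theorem~\ref{theo.canonsol}; for minimality I need to show that $C(S^\sim)\impl A$ is E-valid for every solution $A$, which I would do by contraposition. Assume $F[l]$ and $\neg F[r]$ hold under some assignment to $\bar{\alpha}$ while $A(\bar{\alpha})$ fails. Then $A(\bar{\alpha})\impl\bigwedge_i A(\bar{w}_i)$ is vacuously true, and combined with $F[l]=\bigwedge\Gamma$ the solution condition for $A$ forces $\bigvee\Delta\subseteq F[r]$ to hold, contradicting $\neg F[r]$. Hence $A(\bar{\alpha})$ must hold, giving $C(S^\sim)\leq A$.

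The only subtlety I anticipate is getting the quantifier handling right in the subalgebra and convexity arguments, since $\bar{\alpha}$ is implicitly universally quantified while $\bar{w}_i$ is ground with respect to $\bar{\alpha}$; the case split on the truth of $A(\bar{\alpha})$ is essentially what exploits this asymmetry and does all the real work, so as long as the case analyses are carried out pointwise in each model the arguments are short and direct.
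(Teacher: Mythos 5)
The paper states this theorem without printing a proof (it is deferred to the surrounding discussion and the technical report), so there is no official argument to compare against line by line; your purely semantic, model-by-model approach is the natural one and matches the intended reading of the result. Three of your four steps are correct as written: the Lindenbaum--Tarski construction relativised to the theory of equality, the convexity argument (the case split on the truth value of $A(\bar\alpha)$, using the solution property of $A$ in one branch and promoting $\Land_i B(\bar w_i)$ to $\Land_i C(\bar w_i)$ via $B\leq C$ in the other), and the minimality of the canonical solution by contraposition ($F[l]$ and $\neg F[r]$ together with the vacuous truth of $A(\bar\alpha)\impl\Land_i A(\bar w_i)$ would force $F[r]$, a contradiction) are all exactly right.

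There is, however, one genuine flaw, in the closure-under-$\land$ step. You write that if $A(\bar\alpha)$ holds then ``hence $B(\bar\alpha)$ holds as well''; this does not follow: $A$ and $B$ are arbitrary solutions and nothing links their truth values at $\bar\alpha$. The hypothesis $(A\land B)(\bar\alpha)\impl\Land_i(A\land B)(\bar w_i)$ only fires when \emph{both} conjuncts hold at $\bar\alpha$, so your two cases fail to cover the situation where $A(\bar\alpha)$ holds but $B(\bar\alpha)$ does not. The repair is immediate and symmetric to your first case: if $B(\bar\alpha)$ fails, then $B(\bar\alpha)\impl\Land_i B(\bar w_i)$ is vacuously true and the solution property of $B$ already yields an element of $\Delta$. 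With the three-way distinction ($A(\bar\alpha)$ fails; $B(\bar\alpha)$ fails; both hold, in which case the hypothesis gives $\Land_i A(\bar w_i)$ and the solution property of $A$ applies) the argument closes. You should make this correction explicit, since as stated the step is false rather than merely terse.
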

\tikzstyle{vertex}=[circle,minimum size=12pt,draw,inner sep=0pt]
\tikzstyle{subvertex}=[circle,minimum size=12pt,fill=gray,inner sep=0pt]
\tikzstyle{edge} = [draw,-]
\tikzstyle{subedge} = [draw=gray,-]
\begin{figure}
\begin{tikzpicture}[scale=0.3]
\foreach \pos/\name in {{(7,0)/xyzu}, {(1,4)/xyz},{(9,4)/xzu}, {(13,4)/yzu},
                        {(3,8)/xz}, {(6,8)/yz}, {(12,8)/yu},
                        {(15,8)/zu}, {(5,12)/y}, {(9,12)/z}, {(13,12)/u},
                        {(7,16)/empt}, {(1,12)/x}}
  \node[vertex] (\name) at \pos {};

\foreach \pos/\name in {{(5,4)/xyu}, {(0,8)/xy}, {(9,8)/xu}}
  \node[subvertex] (\name) at \pos {};

\foreach \source/\dest in {x/xy, x/xu, empt/x, empt/y, empt/z, empt/u,
                           x/xz, y/xy, y/yz, y/yu, z/xz,
                           z/yz, z/zu, u/xu, u/yu, u/zu, xy/xyz,
                           xz/xyz, xz/xzu, yz/xyz, yz/yzu,
                           xu/xzu, yu/xyu, yu/yzu, zu/xzu, 
                           zu/yzu, xyz/xyzu, xyu/xyzu, xzu/xyzu, yzu/xyzu}
  \path[edge] (\source) -- (\dest);

\foreach \source/\dest in {xy/xyu, xu/xyu}
\path[subedge] (\source) -- (\dest);

\node (text) at (-8, 0) {canonical solution $C$};

\path[draw,->,thick] (text) -- (xyu);
\end{tikzpicture}\caption{$\formulas/{\sim}$ with solution space $\solutions/{\sim}$ indicated}
\label{fig:sol_space}
\end{figure}
Figure~\ref{fig:sol_space} visualizes this result for a language
with $4$ atoms.
The algorithm $\SF{\forgetOp}$ can be seen as searching upwards through the solution
semilattice, starting at the bottom element $C$. For any (representative of an)
element $F$ of this semilattice, $\forgetOp(F)$ contains formulas which either correspond
to $F$, or an element above $F$. Whenever we reach an element $G$ that is not a solution,
we know that no other element above $G$ is a solution since the solution space
is convex.
%

As indicated above, the algorithm $\SF{\forgetOp}$ can in principle be stated using
E-consequence generators different from $\forgetOp$, and one can formulate
general properties of such generators from which important results, such as 
completeness, can be derived. Furthermore, it will be possible to apply methods
from automated theorem proving in presence of equality to the development of
practically useful E-consequence generators. We leave such further theoretical and empirical
investigation of ``good'' E-consequence generators for future work.
\subsection{Proof with cut}
\label{sec.proof_build}

\begin{theorem}\label{theo.extHseq}
Let $S^\sim$ be an extended Herbrand sequent of a $\Sigma_1$-sequent $S$. Then
$S$ has a proof with a $\forall$-cut s.t. $\comq{\varphi} = \com{S^\sim}$. 
\end{theorem}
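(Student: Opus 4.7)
The cut formula will be $\forall\bar{x}.A(\bar{x})$, where the solution is $\sigma = \unsubst{X}{\lambda\bar{\alpha}.A}$. Since $S^\sim\sigma$ is a quasi-tautology, it has a quantifier-free proof; applying $\impll$ to the distinguished antecedent $A(\bar{\alpha}) \impl \Land_{i=1}^k A(\bar{w}_i)$ decomposes this proof into quantifier-free (propositional-plus-equality) proofs of the two quasi-tautologies $(I'')\colon \Fcal'_1,\ldots,\Fcal'_p \seq \Fcal'_{p+1},\ldots,\Fcal'_q, A(\bar{\alpha})$ and $(II'')\colon A(\bar{w}_1),\ldots,A(\bar{w}_k), \Fcal'_1,\ldots,\Fcal'_p \seq \Fcal'_{p+1},\ldots,\Fcal'_q$.

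From $(I'')$ I build the left premise of the cut: below the proof of $(I'')$ we apply $|U_i|$ weak $\foralll$-inferences on each $\forall\bar{x}_i.F_i$ (for $i \le p$) and $|U_j|$ weak $\existsr$-inferences on each $\exists\bar{x}_j.F_j$ (for $j > p$), which rebuild the original quantified formulas of $S$ and thereby hide every free occurrence of $\bar{\alpha}$ inside a bound quantifier. The eigenvariable condition of the subsequent strong $\forallr$-inference on $A(\bar{\alpha})$ is then met, and it yields $S_l \seq S_r, \forall\bar{x}.A(\bar{x})$ at the cost of $\sum_{i=1}^q |U_i|$ weak quantifier-block introductions. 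The right premise, built from $(II'')$, uses $k$ weak $\foralll$-inferences on the cut formula (one per $\bar{w}_j \in W$) to fold the $A(\bar{w}_i)$'s into $\forall\bar{x}.A(\bar{x})$. A cut on $\forall\bar{x}.A(\bar{x})$ followed by contractions on $S_l, S_r$ then produces $S$.

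The main obstacle is the exact bookkeeping. Naively the right premise would still require the $\sum_i |U_i|$ cleanup inferences on the $\Fcal'_i$'s in order to reach a quasi-tautological top, giving a total of $k + 2\sum_i |U_i|$. To hit the exact target $k + \sum_i |U_i|$, I exploit the observation already used in the proof of Theorem~\ref{theo.canonsol}: for the canonical solution $C = F[l] \land \neg F[r]$, the conjunction $\Land_{i=1}^k C(\bar{w}_i)$ is E-unsatisfiable, because its satisfaction would simultaneously make all left Herbrand instances true and all right ones false, contradicting quasi-tautologicity of $S^*$. By the algebraic structure of $\solutions/{\sim}$ summarized just before this theorem, $A \land C$ is itself a solution (with the same underlying decomposition $D$); taking it as the effective cut formula lets the top of the right premise be the vacuously quasi-tautological sequent $C(\bar{w}_1),\ldots,C(\bar{w}_k) \seq$, so that only the $k$ instantiations of the cut formula are needed there. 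Summing both premises then gives $\comq{\varphi} = \sum_i |U_i| + k = \com{S^\sim}$, as required.
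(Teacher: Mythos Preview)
Your construction of the left premise from $(I'')$ and the identification of the two quasi-tautologies $(I'')$, $(II'')$ matches the paper's sketch exactly. The paper's proof is essentially a pointer to~\cite{HetzlXXAlgorithmic} together with the observation that the end-sequent quantifiers are introduced via the $U_i$-terms and the cut formula is instantiated on the right with the $\bar{w}_j$; the worked example uses the canonical solution, for which the right branch is unproblematic.

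You go further than the paper's sketch in one respect: you notice that for an \emph{arbitrary} solution $A$ the sequent $A(\bar{w}_1),\ldots,A(\bar{w}_k),\Gamma_0\seq\Delta_0$ need not be a quasi-tautology (take e.g.\ $A=\top$ when the schematic sequent is already trivially valid), so that the naive right branch would require a second batch of $\sum_i|U_i|$ end-sequent instantiations. Your fix --- passing to $A\land C$ --- is sound. Two remarks, however. First, since the solution-space theorem you invoke says $C$ is the \emph{least} element, you actually have $A\land C \sim C$; so your ``effective cut formula'' is simply the canonical one, and the appeal to convexity is not needed. Second, and more directly: $\com{S^\sim}$ depends only on the decomposition $D$, not on the particular solution $\sigma$. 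Hence the theorem as stated follows immediately from the construction applied to the canonical solution (for which, as in the paper's example, $\bigwedge_i C(\bar{w}_i)\seq$ is already a quasi-tautology and the right branch needs no end-sequent instantiations at all). Your detour through $A\land C$ is correct but unnecessary for what the theorem literally claims.
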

\begin{proof}
As in~\cite{HetzlXXAlgorithmic}. Note that the quantifier-blocks in the cut and the
equality rules do not change the measured number of weak quantifier-block
introductions analyzed in the paper above. The main steps in the proof are the
following ones: let $S'$ be an extended Herbrand sequent obtained by the
solution $\unsubst{X}{\lambda \bar{\alpha}.A}$. Then a proof with cut formula
$\forall \bar{x}.A\unsubst{\bar{\alpha}}{\bar{x}}$ can be constructed where the
quantifier substitution blocks for the cut formula on the right-hand-side are
$\unsubst{\bar{x}}{\bar{w}}$ for $\bar{w} \in W$ while the cut formula on the
left-hand-side gets the substitution $\unsubst{\bar{x}}{\bar{\alpha}}$.  The
substitutions $\unsubst{\bar{x_i}}{\bar{t}}$ for $\bar{t} \in U_i$ are inserted
to introduce the quantifiers of the formula $F_i$ in the end-sequent.
\end{proof}

\begin{example}

Let $\Gamma = P(f^4a,a), \forall x.fx = s^2x, \forall xy (P(sx,y) \impl
P(x,sy))$ be the left-hand-side of $S$. Then, to the canonical solution corresponds
an {\LK}-proof $\psi$ of the form
{\small
\[
\infer[\cut+\contrstar]{\Gamma \seq P(a,f^4a)}
 { \infer[\forallr^*]{\Gamma \seq P(a,f^4a), \forall xy.A(x,y)}
   { \deduce{\Gamma \seq P(a,f^4a), A(\alpha_1,\alpha_2)}{(\psi_1)}
   }
   &
   \infer[\foralll^*]{\Gamma,\forall xy.A(x,y) \seq P(a,f^4a)}
    { \deduce{\Gamma,A(f^2a,a),A(a,f^2a) \seq P(a,f^4a)}{(\psi_2)}
    }
  }  
\]
}
where $\psi_1$ and $\psi_2$ are cut-free and $\psi_2$ contains only structural
and propositional inferences (in $\psi_2$ only $P(f^4a,a)$ is needed from
$\Gamma$). The quantifier inferences in $\psi_1$ use exactly the substitutions
encoded in $U_1$ and $U_2$. So we have $\comq{\psi} = 10$.

\end{example}

\section{Implementation and Experiments}

Summing up the previous sections, the structure of our cut-introduction
algorithm is the following:
\begin{algorithm}
\caption{Cut-Introduction}
\begin{algorithmic}
\Require {$\pi$: cut-free proof}
\State $T\gets \extractTS(\pi)$
\State $D \gets \getMinimalDecomposition(T)$
\State $C(\bar{x}) \gets \getCanonicalSolution(D)$
\State $F(\bar{x}) \gets \improveSolution(C(\bar{x}))$
\State \Return $\constructProof(F(\bar{x}))$
\end{algorithmic}
\end{algorithm}

Depending on whether the input proof $\pi$ contains equality reasoning or not
we either work modulo quasi-tautologies as described in this paper
or modulo tautologies (as described in~\cite{Hetzl12Towards,HetzlXXAlgorithmic})
in $\improveSolution$ and $\constructProof$. In $\getMinimalDecomposition$ we
can either compute decompositions with a single variable
as in~\cite{Hetzl12Towards,HetzlXXAlgorithmic} or with an unbounded number of variables
as described in Section~\ref{sec.grammar}. We denote these two variants with $\CI^1$
and $\CI^*$ respectively.

These algorithms have been implemented in the gapt-system\footnote{Generic Architecture
for Proof Transformations, \url{http://www.logic.at/gapt/}} which is a framework
for transforming and analyzing formal proofs. It is implemented in Scala and
contains data structures such as formulas, sequents,
resolution and sequent calculus proofs and algorithms like unification,
skolemization, cut-elimination as well as backends for several external solvers
and provers.
For deciding whether a quantifier-free formula is a
tautology we use MiniSat\footnote{\url{http://minisat.se/}}.
We use veriT\footnote{\url{http://www.verit-solver.org/}}
for deciding whether a quantifier-free formula is a quasi-tautology and
prover9\footnote{\url{http://www.cs.unm.edu/{\raise.17ex\hbox{$\scriptstyle\sim$}}mccune/prover9/}} for the actual 
proof construction based on the import described in~\cite{Hetzl13Understanding}.

We have conducted experiments on the prover9-part of the TSTP-library (Thousands
of Solutions of Theorem Provers, see~\cite{Sut10}). The choice of prover9 was motivated by the simple
and clean proof output format Ivy which makes proof import (comparatively) easy.
This library contains 6341 resolution proofs. Of those, 5254 can be parsed
and transformed into a sequent calculus proof using the transformation
described in~\cite{Hetzl13Understanding}. Of those, 2849
have non-trivial termsets (we call a term set trivial if every quantified formula
in the end-sequent is instantiated at most once).

The input data we have used for our experiments is this collection of proofs with
non-trivial term sets. In this collection 66\% 
use equality reasoning and hence must be treated with the method
introduced in this paper. The average term set size is 37,1 
but 46\% have a term set of size $\leq$ 10.
The experiments have
been conducted with version 1.6 of gapt on an Intel i5 QuadCore with 3,33GHz with
an allocation of 2GB heap space and a timeout of 60 seconds for the cut-introduction algorithm.

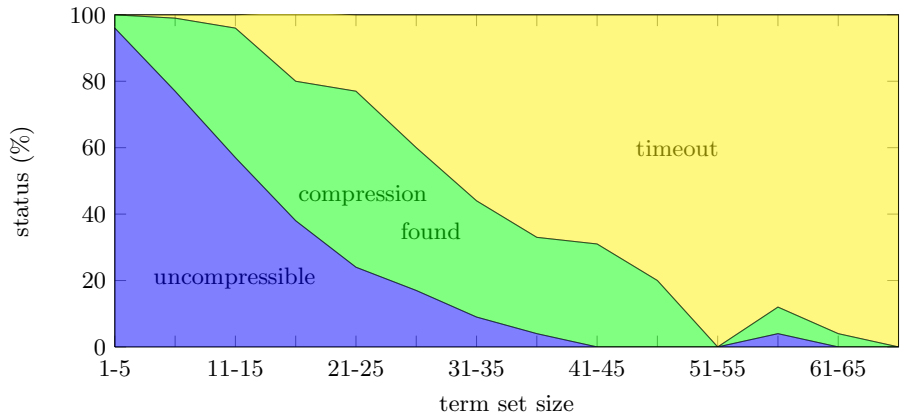
\begin{figure}[t]
\begin{center}
\begin{tikzpicture}
\begin{axis}[
width=12cm,
height=6cm,
ymin=0,
ymax=100,
stack plots=y,
enlargelimits=false,
xlabel=term set size,
ylabel=status ($\%$),
xtick={1,2,3,4,5,6,7,8,9,10,11,12,13,14},
xticklabels={1-5,,11-15,,21-25,,31-35,,41-45,,51-55,,61-65},
xticklabel style={
  anchor=base,
  yshift=-\baselineskip
}
]
\addplot[fill=blue,opacity=0.5] table[x=lineno,y=sep1] {r1866_status5.dat} \closedcycle;
\addplot[fill=green,opacity=0.5] table[x=lineno,y=sep2] {r1866_status5.dat} \closedcycle;
\addplot[fill=yellow,opacity=0.5] table[x=lineno,y=sep3] {r1866_status5.dat} \closedcycle;
\node at (5,15) [anchor=south west] {uncompressible};
\node at (29,40) [anchor=south west] {compression};
\node at (46,30) [anchor=south west] {found};
\node at (85,55) [anchor=south west] {timeout};
\end{axis} 
\end{tikzpicture}
\end{center}
\caption{$\CI^*$: return status by term set size}
\label{fig_status}
\end{figure}

On 19\% of the input proofs our algorithm terminates with finding a
compression, i.e.\ a non-trivial decomposition (of size at most that of the
original termset) and a proof with cut that realizes this decomposition.
On 49\% it terminates determining that the proof is uncompressible, 
more precisely: that there is no proof with a single $\forall$-cut which (by
cut-elimination) reduces to the given input term set and is
of smaller quantifier complexity, see~\cite{HetzlXXAlgorithmic}.
Figure~\ref{fig_status} depicts the return status (in percent) depending on
the size of the term set.
When reading this figure one should keep in mind the
relatively high number of small proofs (see above).
One can observe that proofs with term sets up
to a size of around 50 can be treated well by our current implementation, beyond
that the percentage of timeouts is very large. Small proofs -- unsurprisingly --
tend to be uncompressible.


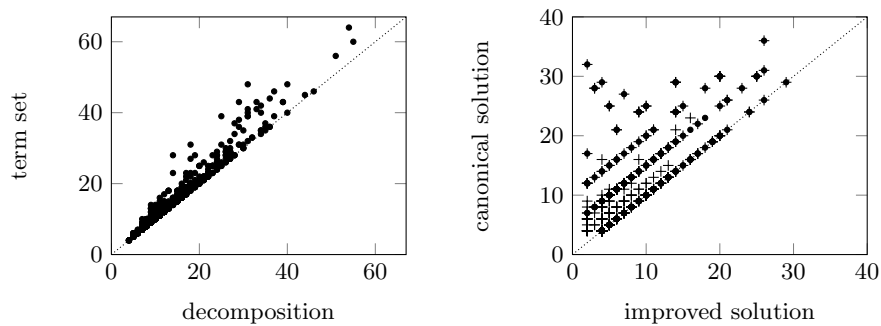
\begin{figure}
\begin{center}
\begin{tikzpicture}
\begin{axis}[
width=5.5cm,
xmin=0,
xmax=67,
ymin=0,
ymax=67,
enlargelimits=false,
xlabel=decomposition,
ylabel=term set]
\addplot[mark=*,only marks,mark size={1pt}] table[x=mgsize,y=tssize] {r1866_GCI_NoPara_ok.dat};
\addplot[smooth,densely dotted] plot coordinates {
  (0,0)
  (67,67)
};
\end{axis}
\end{tikzpicture}
\qquad
\begin{tikzpicture}
\begin{axis}[
width=5.5cm,
xmin=0,
xmax=40,
ymin=0,
ymax=40,
enlargelimits=false,
xlabel=improved solution,
ylabel=canonical solution]
\addplot[mark=*,only marks,mark size={1pt}] table[x=npminsolc,y=cansolc] {r1866_GCI_NoParaWithPara_okok.dat};
\addplot[mark=+,only marks,mark size={2pt}] table[x=wpminsolc,y=cansolc] {r1866_GCI_NoParaWithPara_okok.dat};
\addplot[smooth,densely dotted] plot coordinates {
  (0,0)
  (40,40)
};
\end{axis}
\end{tikzpicture}
\end{center}
\caption{Size Comparison}
\label{fig_size}
\end{figure}

In Figure~\ref{fig_size} we restrict our attention to runs terminating with
a compression. As one
can see from the diagram on the left, a significant reduction of quantifier-complexity
can be achieved by our method. The diagram on the right demonstrates that forgetful reasoning
is highly useful for improving the canonical solution. The points plotted as $\bullet$
are the result after using forgetful resolution only, the points plotted as $+$
are the result after forgetful resolution and paramodulation.

Our experiments also show that the generalization to the introduction of a block
of quantifiers introduced in this paper has a strong effect: of
the 548 proofs on which $\CI^*$ finds a compression, 22\% are found to
be uncompressible by $\CI^1$.




\section{Conclusion}

We have introduced a cut-introduction method that works modulo equality and
is capable of generating cut-formulas containing a block of quantifiers.
We have implemented our new method and have conducted a large-scale empirical evaluation
which demonstrates its feasibility on realistic examples.
Lessons learned from these experiments include that blocks of quantifiers
allow for significantly more proofs to be compressed and that forgetful reasoning
methods, while rough in theory, are highly useful for our application in practice.

As future work we plan to extend our method to work modulo (suitably specified)
equational theories. We also plan to evaluate our method on proofs produced by
Tableaux-provers and SMT-solvers. Another important, and non-trivial, extension will be to cope with cuts
that contain quantifier-alternations.

{\bf Acknowledgements.} The authors would like to thank Pascal Fontaine for help
with the veriT-solver and Geoff Sutcliffe for providing the prover9-TSTP
test set.

\bibliography{references}
\bibliographystyle{splncs03}

\end{document}